\newcommand{\be}{\begin{equation}}
\newcommand{\en}{\end{equation}}
\newcommand{\bea}{\begin{eqnarray}}
\newcommand{\ena}{\end{eqnarray}}
\newcommand{\beano}{\begin{eqnarray*}}
\newcommand{\enano}{\end{eqnarray*}}
\newcommand{\bee}{\begin{enumerate}}
\newcommand{\ene}{\end{enumerate}}
\newcommand{\mc}{\mathcal}
\newcommand{\D}{{\mc D}}
\newcommand{\E}{{\cal E}}
\newcommand{\F}{{\cal F}}
\newcommand{\G}{{\cal G}}
\newcommand{\Lc}{{\cal L}}
\newcommand{\1}{1 \!\! 1}
\newcommand{\Hil}{\mc H}
\newtheorem{thm}{Theorem}
\newtheorem{prop}[thm]{Proposition}
\newtheorem{defn}[thm]{Definition}
\newenvironment{proof}{\noindent {\bf Proof --}}{\hfill$\square$ \vspace{3mm}\endtrivlist}
\begin{document}

\thispagestyle{empty}

\vspace*{2cm}

\begin{center}
{\Large \bf Pseudo-bosons and Riesz bi-coherent states\footnote{This paper is dedicated to the memory of Gerard Emch}}   \vspace{2cm}\\

{\large F. Bagarello}\\
  Dipartimento di Energia, Ingegneria dell'Informazione e Modelli Matematici,\\
Facolt\`a di Ingegneria, Universit\`a di Palermo,\\ I-90128  Palermo, Italy\\
and I.N.F.N., Sezione di Torino\\
e-mail: fabio.bagarello@unipa.it\\
home page: www.unipa.it/fabio.bagarello

\end{center}

\vspace*{2cm}

\begin{abstract}
\noindent After a brief review on $\D$-pseudo-bosons we introduce what we call {\em Riesz bi-coherent states}, which are pairs of states sharing with ordinary coherent states most of their features. In particular, they produce a resolution of the identity and they are eigenstates of two different annihilation operators which obey pseudo-bosonic commutation rules.

\end{abstract}

\vspace{2cm}


\vfill


\newpage

\section{Introduction}

In a series of papers the notion of $\D$-pseudo bosons ($\D$-PBs) has been introduced and studied in many details. We refer to \cite{baginbagbook} for a recent review on this subject, and for more references. In particular, we have analyzed the functional structure arising from two operators $a$ and $b$, acting on a Hilbert space $\Hil$ and satisfying, in a suitable sense, the pseudo-bosonic commutation rule $[a,b]=\1$. Here $\1$ is the identity operator. We have shown how two biorthogonal families of eigenvectors of two non self-adjoint operators can be easily constructed, having real eigenvalues, and we have discussed how and when these operators are similar to a single self-adjoint number operator, and which kind of intertwining relations can be deduced. We have also seen that this settings is strongly related to physics, and in particular to $PT$-quantum mechanics, \cite{ben,mosta}, since many models originally introduced in that context can be written in terms of $\D$-PBs.

In connection with $\D$-PBs, the notion of bicoherent states, originally introduced in \cite{tri}, has been considered in some of its aspects, see \cite{bagpb1,abg2015}. Since $a$ and $b$ are unbounded, several mathematical subtle points need to be considered when dealing with these states, as it is clear from the treatment in \cite{abg2015}. However, it is possible, and instructive, to consider a simpler situation, and this is exactly what we will do in this paper: more explicitly,  we will adapt the notion of Riesz bases to coherent states, introducing what we can call {\em Riesz bicoherent states} (RBCS), and we will study some of their features.

This article is organized as follows: in the next section, to keep the paper self-contained, we review few facts on $\D$-PBs. In Section III we introduce our RBCS and analyze their properties, while our conclusions and plans for the future are discussed in Section IV.

\section{Few facts on  $\D$-PBs}

We briefly review here few facts and definitions on $\D$-PBs. More details can be found in  \cite{baginbagbook}.

Let $\Hil$ be a given Hilbert space with scalar product $\left<.,.\right>$ and related norm $\|.\|$. Let further $a$ and $b$ be two operators
on $\Hil$, with domains $D(a)$ and $D(b)$ respectively, $a^\dagger$ and $b^\dagger$ their adjoint, and let $\D$ be a dense subspace of $\Hil$
such that $a^\sharp\D\subseteq\D$ and $b^\sharp\D\subseteq\D$, where $x^\sharp$ is $x$ or $x^\dagger$. Of course, $\D\subseteq D(a^\sharp)$
and $\D\subseteq D(b^\sharp)$.

\begin{defn}\label{def21}
The operators $(a,b)$ are $\D$-pseudo bosonic ($\D$-pb) if, for all $f\in\D$, we have
\be
a\,b\,f-b\,a\,f=f.
\label{A1}\en
\end{defn}

\vspace{2mm}

Our  working assumptions are the following:

\vspace{2mm}

{\bf Assumption $\D$-pb 1.--}  there exists a non-zero $\varphi_{ 0}\in\D$ such that $a\,\varphi_{ 0}=0$.

\vspace{1mm}

{\bf Assumption $\D$-pb 2.--}  there exists a non-zero $\Psi_{ 0}\in\D$ such that $b^\dagger\,\Psi_{ 0}=0$.

\vspace{2mm}

Then, if $(a,b)$ satisfy Definition \ref{def21}, it is obvious that $\varphi_0\in D^\infty(b):=\cap_{k\geq0}D(b^k)$ and that $\Psi_0\in D^\infty(a^\dagger)$, so
that the vectors \be \varphi_n:=\frac{1}{\sqrt{n!}}\,b^n\varphi_0,\qquad \Psi_n:=\frac{1}{\sqrt{n!}}\,{a^\dagger}^n\Psi_0, \label{A2}\en
$n\geq0$, can be defined and they all belong to $\D$ and, as a consequence, to the domains of $a^\sharp$, $b^\sharp$ and $N^\sharp$, where $N=ba$. We further introduce $\F_\Psi=\{\Psi_{ n}, \,n\geq0\}$ and
$\F_\varphi=\{\varphi_{ n}, \,n\geq0\}$.

It is now simple to deduce the following lowering and raising relations:
\be
\left\{
    \begin{array}{ll}
b\,\varphi_n=\sqrt{n+1}\varphi_{n+1}, \qquad\qquad\quad\,\, n\geq 0,\\
a\,\varphi_0=0,\quad a\varphi_n=\sqrt{n}\,\varphi_{n-1}, \qquad\,\, n\geq 1,\\
a^\dagger\Psi_n=\sqrt{n+1}\Psi_{n+1}, \qquad\qquad\quad\, n\geq 0,\\
b^\dagger\Psi_0=0,\quad b^\dagger\Psi_n=\sqrt{n}\,\Psi_{n-1}, \qquad n\geq 1,\\
       \end{array}
        \right.
\label{A3}\en as well as the eigenvalue equations $N\varphi_n=n\varphi_n$ and  $N^\dagger\Psi_n=n\Psi_n$, $n\geq0$. In particular, as a consequence
of these two last equations,  choosing the normalization of $\varphi_0$ and $\Psi_0$ in such a way $\left<\varphi_0,\Psi_0\right>=1$, we deduce that
\be \left<\varphi_n,\Psi_m\right>=\delta_{n,m}, \label{A4}\en
 for all $n, m\geq0$. Hence $\F_\Psi$ and $\F_\varphi$ are biorthogonal. Our third assumption is the following:

\vspace{2mm}

{\bf Assumption $\D$-pb 3.--}  $\F_\varphi$ is a basis for $\Hil$.

\vspace{1mm}

This is equivalent to requiring that $\F_\Psi$ is a basis for $\Hil$ as well, \cite{chri}. However, several  physical models suggest to adopt the following weaker version of this assumption, \cite{baginbagbook}:

\vspace{2mm}

{\bf Assumption $\D$-pbw 3.--}  For some subspace $\G$ dense in $\Hil$, $\F_\varphi$ and $\F_\Psi$ are $\G$-quasi bases.

\vspace{2mm}
This means that, for all $f$ and $g$ in $\G$,
\be
\left<f,g\right>=\sum_{n\geq0}\left<f,\varphi_n\right>\left<\Psi_n,g\right>=\sum_{n\geq0}\left<f,\Psi_n\right>\left<\varphi_n,g\right>,
\label{A4b}
\en
which can be seen as a weak form of the resolution of the identity, restricted to $\D$.
To refine further the structure, in \cite{baginbagbook} we have assumed that a self-adjoint, invertible, operator $\Theta$, which leaves, together with $\Theta^{-1}$, $\D$ invariant, exists: $\Theta\D\subseteq\D$, $\Theta^{-1}\D\subseteq\D$. Then we say that $(a,b^\dagger)$ are $\Theta-$conjugate if $af=\Theta^{-1}b^\dagger\,\Theta\,f$, for all $f\in\D$. One can prove that, if $\F_\varphi$ and $\F_\Psi$ are $\D$-quasi bases for $\Hil$, then the operators $(a,b^\dagger)$ are $\Theta-$conjugate if and only if $\Psi_n=\Theta\varphi_n$, for all $n\geq0$. Moreover, if $(a,b^\dagger)$ are $\Theta-$conjugate, then $\left<f,\Theta f\right>>0$ for all non zero $f\in \D$.

In the rest of the paper, rather than using Assumption $\D$-pbw 3., we will consider the following stronger version:

\vspace{2mm}

{\bf Assumption $\D$-pbs 3.--}  $\F_\varphi$ is a Riesz basis for $\Hil$.

\vspace{1mm}

This implies that a bounded operator $S$, with bounded inverse $S^{-1}$, exists in $\Hil$, together with an orthonormal basis $\F_e=\{e_n,\,n\geq0\}$, such that $\varphi_n=Se_n$, for all $n\geq0$. Then, because of the uniqueness of the  basis biorthogonal to $\F_\varphi$, it is clear that $\F_\Psi$ is also a Riesz basis for $\Hil$, and that $\Psi_n=(S^{-1})^\dagger e_n$. Hence, putting $\Theta:=(S^\dagger S)^{-1}$, we deduce that $\Theta$ is also bounded, with bounded inverse, is self-adjoint, positive, and that $\Psi_n=\Theta \varphi_n$, for all $n\geq0$. $\Theta$ and $\Theta^{-1}$ can be both written as a series of rank-one operators. In fact, adopting the Dirac bra-ket notation, we have
$$
\Theta=\sum_{n=0}^\infty |\Psi_n\left>\right<\Psi_n|,\qquad \Theta^{-1}=\sum_{n=0}^\infty |\varphi_n\left>\right<\varphi_n|.
$$
Of course both $|\Psi_n\left>\right<\Psi_n|$ and $|\varphi_n\left>\right<\varphi_n|$ are not projection operators\footnote{Here $\left(|f\left>\right<f|\right)g=\left<f,g\right>f$, for all $f,g\in\Hil$.} since, in general the norms of $\Psi_n$ and $\varphi_n$ are not equal to one.

Notice now that, calling $\Lc_\varphi$ and $\Lc_\Psi$ the linear span of $\F_\varphi$ and $\F_\Psi$ respectively, both sets are contained in $\D$ and dense in $\Hil$. Moreover, $\Theta:\Lc_\varphi\rightarrow\Lc_\Psi$, so that it is quite natural to imagine that $\Theta$ also maps $\D$ into itself. This is, in fact, ensured if both $S^\sharp$ and $(S^{-1})^\sharp$ map $\D$ into $\D$, condition which is satisfied in several explicit models, and for this reason will always be assumed here. Hence, both $\Theta$ and $\Theta^{-1} $map $\D$ into itself. Of course, this assumption also guarantees that $e_n\in\D$, for all $n$.

The lowering and raising conditions in (\ref{A3}) for $\varphi_n$ can be rewritten in terms of $e_n$ as follows:
\be
S^{-1}aSe_n=\sqrt{n}\,e_{n-1},\qquad S^{-1}bSe_n=\sqrt{n+1}\,e_{n+1},
\label{a5}\en
for all $n\geq0$. Notice that we are putting $e_{-1}\equiv0$. It is now possible to check that
$$
S^\dagger b^\dagger S^{-1}f=S^{-1} a Sf, \qquad S^\dagger a^\dagger S^{-1}f=S^{-1} b Sf,
$$
for all $f\in\D$. Also, the first equation in (\ref{a5}) suggests to define an operator $c$ acting on $\D$ as follows: $cf=S^{-1}aSf$. Of course, if we take $f=e_n$, we recover (\ref{a5}). Moreover, simple computations show that $c^\dagger$ satisfies the equality $c^\dagger f=S^{-1}bS f$, $f\in\D$, which again, taking $f=e_n$, produces the second equality in (\ref{a5}). These operators satisfy the canonical commutation relation (CCR) on $\D$: $[c,c^\dagger]f=f$, $\forall f\in\D$.

We end this section by noticing that, since each pair of biorthogonal Riesz bases are also $\D$-quasi bases, Proposition 3.2.3 of \cite{baginbagbook} implies that $(a,b^\dagger)$ are $\Theta$-conjugate: $af=\Theta^{-1}b^\dagger \Theta f$, $\forall f\in\D$, and that $\Theta$ is positive, as we have already noticed because of its explicit form.

\section{Riesz bicoherent states}

In \cite{bagpb1,abg2015} we have considered the notion of bicoherent states, and we have deduced some of their properties. Here we discuss a somehow stronger version of these states, which we call Riesz bicoherent states (RBCS).

We start recalling that, calling $W(z)=e^{zc^\dagger-\overline{z}\,c}$, a {\em standard} coherent state is the vectors
\be
\Phi(z)=W(z)e_0=e^{-|z|^2/2}\sum_{k=0}^\infty \frac{z^k}{\sqrt{k!}}\,e_k.
\label{30}\en
Here $c$ and $c^\dagger$ are operators satisfying the CCR, and $\F_e$ is the orthonormal basis related to these operators as shown in Section II. The vector $\Phi(z)$ is well defined, and normalized, for all $z\in\Bbb C$. This is just a consequence of the fact that $W(z)$ is unitary, or, alternatively, of the fact that $\left<e_k,e_l\right>=\delta_{k,l}$. Moreover,
$$
c\,\Phi(z)=z\Phi(z),\qquad\mbox{and}\qquad \frac{1}{\pi}\int_{\Bbb C}d^2z|\Phi(z)\left>\right<\Phi(z)|=\1.
$$
It is also well known that $\Phi(z)$ saturates the Heisenberg uncertainty relation, which will not be discussed in this paper.

What is interesting to us here is whether the family of vectors $\{\Phi(z),\,z\in\Bbb C\}$ can be somehow generalized in order to recover similar properties, and if this generalization is related to the pseudo-bosonic operators $a$ and $b$ introduced in the previous section. For that, let us introduce the following operators:
\be
U(z)=e^{zb-\overline{z}\,a},\qquad V(z)=e^{za^\dagger-\overline{z}\,b^\dagger}.
\label{31}\en
Of course, if $a=b^\dagger$, then $U(z)=V(z)$ and the operator is unitary and essentially coincide with $W(z)$, with $a\equiv c$. However, the case of interest here is when $a\neq b^\dagger$. In \cite{bagpb1,abg2015} we have introduced the vectors
\be
\varphi(z)=U(z)\varphi_0,\qquad \Psi(z)=V(z)\,\Psi_0.
\label{32}\en
They surely exist if $z=0$. We will see that, in the present working conditions, they are well defined in $\Hil$ for all $z\in\Bbb{C}$. A way to prove this result is to use the Baker-Campbell-Hausdorff formula which produces the identities
$$
U(z)=e^{-|z|^2/2}\,e^{z\,b}\,e^{-\overline{z}\,a},\qquad V(z)=e^{-|z|^2/2}\,e^{z\,a^\dagger}\,e^{-\overline{z}\,b^\dagger}.
$$
Then,
\be
\varphi(z)=e^{-|z|^2/2}\,\sum_{n=0}^\infty\,\frac{z^n}{\sqrt{n!}}\,\varphi_n, \qquad
\Psi(z)=e^{-|z|^2/2}\,\sum_{n=0}^\infty\,\frac{z^n}{\sqrt{n!}}\,\Psi_n.
\label{33}\en
These clearly extend formula (\ref{30}) for $\Phi(z)$.
Now, \cite{bagpb1}, since $\|\varphi_n\|=\|Se_n\|\leq\|S\|$ and $\|\Psi_n\|=\|(S^{-1})^\dagger e_n\|\leq\|S^{-1}\|$, the two series converge for all $z\in\Bbb C$. Hence both $\varphi(z)$ and $\Psi(z)$ are defined everywhere in the complex plane. Incidentally we observe that this is different from what happens in \cite{abg2015}, where $\F_\varphi$ and $\F_\Psi$ are not assumed to be Riesz bases, and some estimate must be assumed on $\|\varphi_n\|$ and $\|\Psi_n\|$. Also in view of possible applications, and in particular of the relation with Definition \ref{def3} below, it is interesting to show how to deduce the same result (i.e. $\varphi(z)$ and $\Psi(z)$ are defined everywhere) using a different strategy, assuming that $a$, $b$ and $c$ are related as in Section II.

The key of this strategy is the following

\begin{prop}
With the above definitions the following equalities hold:
\be
U(z)f=SW(z)S^{-1}f, \qquad \mbox{and}\qquad V(z)f=(S^{-1})^\dagger W(z)S^\dagger f
\label{34}
\en
for all $f\in\D$.
\end{prop}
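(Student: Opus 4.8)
The plan is to reduce the two exponential identities to intertwining relations between their \emph{generators} and then lift these to the exponentials term by term. Consider first $U(z)$. Using the relations recorded in Section~II, namely $c\,f=S^{-1}aSf$ and $c^\dagger f=S^{-1}bSf$ for $f\in\D$, together with $S^{-1}\D\subseteq\D$, one gets at once that $af=S\,c\,S^{-1}f$ and $bf=S\,c^\dagger S^{-1}f$ on $\D$, hence
\be
(zb-\overline z\,a)f=z\,S\,c^\dagger S^{-1}f-\overline z\,S\,c\,S^{-1}f=S\,(zc^\dagger-\overline z\,c)\,S^{-1}f,\qquad f\in\D.
\en
The invariance hypotheses $a^\sharp\D\subseteq\D$, $b^\sharp\D\subseteq\D$, $S^\sharp\D\subseteq\D$, $(S^{-1})^\sharp\D\subseteq\D$ guarantee that every vector appearing here is again in $\D$, so the identity may be iterated: by an easy induction,
\be
(zb-\overline z\,a)^k f=S\,(zc^\dagger-\overline z\,c)^k\,S^{-1}f,\qquad f\in\D,\ k\geq0.
\en
For $V(z)$ the cleanest route is to use $\Psi_n=(S^{-1})^\dagger e_n$ together with the ladder relations (\ref{A3}) and their counterparts $c^\dagger e_n=\sqrt{n+1}\,e_{n+1}$, $c\,e_n=\sqrt{n}\,e_{n-1}$: these give $a^\dagger(S^{-1})^\dagger e_n=(S^{-1})^\dagger c^\dagger e_n$ and $b^\dagger(S^{-1})^\dagger e_n=(S^{-1})^\dagger c\,e_n$, whence $(za^\dagger-\overline z\,b^\dagger)^k(S^{-1})^\dagger e_n=(S^{-1})^\dagger(zc^\dagger-\overline z\,c)^k e_n$ on the span $\Lc_e$ of $\F_e$.

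It then remains to sum the exponential series. The operator $W(z)=e^{zc^\dagger-\overline z\,c}$ is the exponential of the essentially skew--adjoint operator $zc^\dagger-\overline z\,c$, hence unitary, and a direct estimate (the $e_n$ are entire vectors for $zc^\dagger-\overline z\,c$) shows that its defining series converges in norm on every finite linear combination of the $e_n$, i.e.\ on $\Lc_e$. Fix $f\in\Lc_\varphi$, the linear span of $\F_\varphi$; then $S^{-1}f\in\Lc_e$, so $\sum_{k=0}^N\frac1{k!}(zc^\dagger-\overline z\,c)^k S^{-1}f\to W(z)S^{-1}f$ in $\Hil$. Applying the bounded operator $S$ and using the term--by--term identity above, the partial sums of $U(z)f$ converge to $S\,W(z)\,S^{-1}f$. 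Since $S\,W(z)\,S^{-1}$ is bounded, with norm at most $\|S\|\,\|S^{-1}\|$, and $\Lc_\varphi$ is dense in $\Hil$ (and contained in $\D$), the identity $U(z)f=S\,W(z)\,S^{-1}f$ extends by continuity to all of $\D$ — indeed to all of $\Hil$. The identity for $V(z)$ follows in exactly the same way, working on $\Lc_\Psi=(S^{-1})^\dagger\Lc_e$, applying $(S^{-1})^\dagger$, and using $(S^{-1})^\dagger S^\dagger=\1$. As a by-product, taking $f=\varphi_0=Se_0$ yields $\varphi(z)=U(z)\varphi_0=S\,W(z)e_0=S\,\Phi(z)$, which re-proves that $\varphi(z)$ is defined for every $z\in\Bbb C$, and similarly for $\Psi(z)$.

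The main obstacle is not the algebra, which is routine once the generator identities are in place, but the analytic bookkeeping: $U(z)$ and $V(z)$ are exponentials of \emph{unbounded} operators, so a priori neither the Baker--Campbell--Hausdorff factorisation (which, in fact, is not needed for this particular statement) nor the exponential series is meaningful on all of $\D$. The two--step scheme above is exactly what circumvents this: prove the identity on the finite span $\Lc_\varphi$ (resp.\ $\Lc_\Psi$), where the $e_n$ are genuine entire vectors and every series converges, and then transport it to $\D$ by boundedness of $S$ and $S^{-1}$ — which is where Assumption $\D$-pbs~3, the Riesz--basis hypothesis, does the real work, since without it (as in \cite{abg2015}) additional growth estimates on $\|\varphi_n\|$ and $\|\Psi_n\|$ would be required.
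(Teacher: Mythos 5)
Your algebraic core --- the generator identity $(zb-\overline{z}\,a)f=S(zc^\dagger-\overline{z}\,c)S^{-1}f$ on $\D$, iterated by induction to all powers $k$ --- is exactly the paper's Eq.~(\ref{35}), and your treatment of the $V(z)$ case via $\Psi_n=(S^{-1})^\dagger e_n$ is a reasonable way to fill in the ``similar'' second half. The gap is in the last analytic step. By proving the exponential identity only on $\Lc_\varphi$ and then invoking density, you extend the \emph{bounded} operator $SW(z)S^{-1}$ from $\Lc_\varphi$ to $\Hil$; you do not thereby show that the series $\sum_{k\geq0}\frac{1}{k!}(zb-\overline{z}\,a)^kf$ defining $U(z)f$ converges for an arbitrary $f\in\D\setminus\Lc_\varphi$, nor that its sum is $SW(z)S^{-1}f$. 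That convergence is part of the content of the proposition (the paper states it explicitly as the conclusion: the series converges for all $z\in\Bbb C$ and for all $f\in\D$, and \emph{defines} $U(z)$ on $\D$), and density of $\Lc_\varphi$ cannot supply it: knowing $U(z)f_n\to SW(z)S^{-1}f$ for $f_n\in\Lc_\varphi$, $f_n\to f$, says nothing about the series at $f$ unless you already know $U(z)$ is closable with the right closure --- which is essentially what is being proved. Your parenthetical ``indeed to all of $\Hil$'' compounds this: the paper stresses immediately after the proposition that $U(z)$ and $V(z)$ are in general unbounded and that (\ref{34}) makes sense only on $\D$; what extends to $\Hil$ is $SW(z)S^{-1}$, not $U(z)$.

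The paper closes the step differently, and on all of $\D$ at once: for every $f\in\D$ it writes $SW(z)S^{-1}f=S\bigl(\sum_{k}\frac{1}{k!}(zc^\dagger-\overline{z}\,c)^k\bigr)S^{-1}f$, pulls the bounded $S$ inside the sum, and uses the term-by-term identity (\ref{35}) to obtain $\sum_{k}\frac{1}{k!}(zb-\overline{z}\,a)^kf$; the convergence of this last series on $\D$ then comes out as a byproduct rather than having to be transported from a dense subspace. (To be fair, expanding $W(z)$ as its power series on $S^{-1}f$ tacitly uses that $S^{-1}f$ is an entire vector of $zc^\dagger-\overline{z}\,c$, a point the paper also glosses over; but that is the intended argument, and unlike the density step it genuinely addresses every $f\in\D$, not just the finite span.)
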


\begin{proof}
We prove here the first equality. The second can be proved in a similar way.

First of all we can prove, by induction, that, for all $f\in\D$ and for all $k=0,1,2,3,\ldots$,
\be
S\left(zc^\dagger-\overline{z}\,c\right)^kS^{-1}f=\left(zb-\overline{z}\,a\right)^kf.
\label{35}\en
This equality is evident for $k=0$. This equality for $k=1$ follows from the equations $cf=S^{-1}aSf$ and $c^\dagger f=S^{-1}bS f$, $f\in\D$. Now, assuming that this equation is satisfied for a given $k$, we have:
$$
S\left(zc^\dagger-\overline{z}\,c\right)^{k+1}S^{-1}f=S\left(zc^\dagger-\overline{z}\,c\right)S^{-1}S\left(zc^\dagger-\overline{z}\,c\right)^{k}S^{-1}f
=S\left(zc^\dagger-\overline{z}\,c\right)S^{-1}\left(zb-\overline{z}\,a\right)^kf.
$$
Now, since $\left(zb-\overline{z}\,a\right)^kf\in\D$, it follows that
$$
S\left(zc^\dagger-\overline{z}\,c\right)S^{-1}\left(zb-\overline{z}\,a\right)^kf=\left(zb-\overline{z}\,a\right)\left(zb-\overline{z}\,a\right)^kf=\left(zb-\overline{z}\,a\right)^{k+1}f.
$$
Hence (\ref{35}) follows. Notice that all the equalities above are well defined since $\D$ is stable under the action of all the operators involved in our computation.

Now, let us compute $SW(z)S^{-1}f$. Because of the boundedness of $S$, $S^{-1}$ and $W(z)$, we have:
$$
SW(z)S^{-1}f=S\left(\sum_{k=0}^\infty\frac{1}{k!}\left(zc^\dagger-\overline{z}\,c\right)^k\right)S^{-1}f=$$
$$=\sum_{k=0}^\infty\frac{1}{k!}S\left(zc^\dagger-\overline{z}\,c\right)^kS^{-1}f=
\sum_{k=0}^\infty\frac{1}{k!}\left(zb-\overline{z}\,a\right)^kf.
$$
Then, since $SW(z)S^{-1}$ is bounded, the series $\sum_{k=0}^\infty\frac{1}{k!}\left(zb-\overline{z}\,a\right)^kf$ converges for all $z\in\Bbb C$ and for all $f\in\D$, and define $U(z)$ on $\D$.

\end{proof}

This Proposition implies that, if $S$ and $S^{-1}$ are both bounded, the three {\em displacement operators} $U(z)$, $V(z)$ and $W(z)$ are {\em almost similar}, meaning with this that a similarity map $S$ indeed exists, but the equalities in (\ref{34}) makes only sense, in general, on $\D$ and not on the whole $\Hil$. This can be understood easily: while $W(z)$, $S$ and $S^{-1}$ are bounded operators, $U(z)$ and $V(z)$ in general are unbounded, so they cannot be defined in all of $\Hil$.

\vspace{2mm}

An immediate and interesting consequence of the equations in (\ref{34}) is that $V(z)$ and $U(z)$ satisfy the following intertwining relation on $\D$:
\be
SS^\dagger V(z)f=U(z)SS^\dagger f
\label{36}\en
for all $f\in\D$. This may be relevant, since this kind of relations have several consequences in general. We refer to \cite{intop} for some results on intertwining operators. We will not insist on this aspect here, but still we want to stress that the operator doing the job, $SS^\dagger$, is close to $\Theta=S^\dagger S$, but $S$ and $S^\dagger$ appear in the reversed order. Of course, these two operators coincide if $S$ is self-adjoint.

\vspace{2mm}

Our results allow us to conclude (once more, see formula (\ref{32})) now that the two vectors in (\ref{32}) are well defined for all $z\in\Bbb C$, and, more interesting, that
\be
\varphi(z)=U(z)\varphi_0=S\Phi(z), \qquad \Psi(z)=V\Psi_0=(S^{-1})^\dagger\Phi(z),
\label{37}\en
for all $z\in\Bbb C$. The proof is straightforward and will not be given here. We just notice that, in particular, these equations imply that $\varphi_0\in D(U(z))$ and $\Psi_0\in D(V(z))$, $\forall\,z\in\Bbb C$.

In analogy with the notion of Riesz bases, formula (\ref{37}) suggests to introduce the notion of RBCS:

\vspace{2mm}

\begin{defn}\label{def3}
A pair of vectors $(\eta(z),\xi(z))$, $z\in\E$, for some $\E\subseteq\Bbb C$, are called RBCS if a standard coherent state $\Phi(z)$, $z\in\E$, and a bounded operator $T$ with bounded inverse $T^{-1}$ exists such that
\be
\eta(z)=T\Phi(z), \qquad \xi(z)=(T^{-1})^\dagger\Phi(z),
\label{38}\en

\end{defn}

It is clear then that $(\varphi(z),\Psi(z))$ are RBCS, with $\E=\Bbb C$. It is easy to check  that RBCS have a series of nice properties, which follow easily from similar properties of $\Phi(z)$. These properties are listed in the following proposition:

\begin{prop} Let $(\eta(z),\xi(z))$, $z\in\Bbb C$, be a pair of RBCS. Then:

(1) $$\left<\eta(z),\xi(z)\right>=1,$$
$\forall\,z\in\Bbb C$.

(2) For all $f,g\in\Hil$ the following equality ({\em resolution of the identity}) holds:
\be
\left<f,g\right>=\frac{1}{\pi}\int_{\Bbb C}d^2z \left<f,\eta(z)\right>\left<\xi(z),g\right>
\label{39}\en

(3) If a subset $\D\subset\Hil$ exists, dense in $\Hil$ and invariant under the action of $T^\sharp$, $(T^{-1})^\sharp$ and $c^\sharp$, and if the standard coherent state $\Phi(z)$ belongs to $\D$, then two operators $a$ and $b$ exist, satisfying (\ref{A1}), such that
\be
a\,\eta(z)=z\eta(z),\qquad b^\dagger\xi(z)=z\xi(z)
\label{310}\en
\end{prop}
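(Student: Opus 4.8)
The plan is to push everything down to the corresponding well-known facts about the standard coherent states $\Phi(z)$, transported through the bounded invertible operator $T$ of Definition \ref{def3}.

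For part (1) I would simply compute
$$
\langle\eta(z),\xi(z)\rangle=\langle T\Phi(z),(T^{-1})^\dagger\Phi(z)\rangle=\langle\Phi(z),T^\dagger(T^{-1})^\dagger\Phi(z)\rangle=\langle\Phi(z),\Phi(z)\rangle=1 ,
$$
using $T^\dagger(T^{-1})^\dagger=(T^{-1}T)^\dagger=\1$ and the fact that $\Phi(z)$ is normalized for every $z\in\Bbb C$. For part (2) the idea is the same: move the bounded operators onto $f$ and $g$, so that for $f,g\in\Hil$
$$
\frac1\pi\int_{\Bbb C}d^2z\,\langle f,\eta(z)\rangle\langle\xi(z),g\rangle=\frac1\pi\int_{\Bbb C}d^2z\,\langle T^\dagger f,\Phi(z)\rangle\langle\Phi(z),T^{-1}g\rangle=\langle T^\dagger f,T^{-1}g\rangle=\langle f,g\rangle ,
$$
where in the middle step I use the (weak form of the) resolution of the identity $\frac1\pi\int_{\Bbb C}d^2z\,|\Phi(z)\rangle\langle\Phi(z)|=\1$ — which is all that is needed, since $T^\dagger f$ and $T^{-1}g$ are genuine vectors of $\Hil$ — and in the last step that $\langle T^\dagger f,T^{-1}g\rangle=\langle f,TT^{-1}g\rangle$. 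The only point requiring a word of care here is the legitimacy of interchanging the integral with the scalar product, but this is standard for the coherent-state overcompleteness relation and needs no new argument.

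For part (3), under the stated invariance hypotheses I would define, on $\D$,
$$
a\,f:=T\,c\,T^{-1}f,\qquad b\,f:=T\,c^\dagger\,T^{-1}f ,
$$
which are well posed because $T^{-1}$, then $c^\sharp$, then $T$ all map $\D$ into $\D$. A direct computation then gives, for every $f\in\D$,
$$
a\,b\,f-b\,a\,f=T\,c\,c^\dagger\,T^{-1}f-T\,c^\dagger\,c\,T^{-1}f=T\,[c,c^\dagger]\,T^{-1}f=T\,T^{-1}f=f ,
$$
using the CCR for $(c,c^\dagger)$ on $\D$ together with $T^{-1}f\in\D$, so $(a,b)$ obey (\ref{A1}). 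Since $\Phi(z)\in\D$ and $\D$ is $T$-invariant, $\eta(z)=T\Phi(z)\in\D\subseteq D(a)$, and
$$
a\,\eta(z)=T\,c\,T^{-1}\big(T\Phi(z)\big)=T\,c\,\Phi(z)=T\,\big(z\Phi(z)\big)=z\,\eta(z) .
$$
For the second eigenvalue equation I would check that $b^\dagger f=(T^{-1})^\dagger c\,T^\dagger f$ for $f\in\D$ (the adjoint of $b$ read on $\D$, exactly as in the analogous computation for $c^\dagger$ in Section II), observe that $\xi(z)=(T^{-1})^\dagger\Phi(z)\in\D$ because $\D$ is invariant under $(T^{-1})^\dagger$, and that $T^\dagger\xi(z)=T^\dagger(T^\dagger)^{-1}\Phi(z)=\Phi(z)$; then
$$
b^\dagger\,\xi(z)=(T^{-1})^\dagger c\,T^\dagger\xi(z)=(T^{-1})^\dagger c\,\Phi(z)=z\,(T^{-1})^\dagger\Phi(z)=z\,\xi(z) .
$$

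The step I expect to be the real obstacle is the domain bookkeeping in part (3): one must verify that $(a,b)$ so defined are honestly $\D$-pseudo-bosonic in the sense of Definition \ref{def21} (i.e. that every composition above stays inside $\D$), that $\eta(z)$ and $\xi(z)$ themselves lie in $\D$, and above all that the formal expression $(T^{-1})^\dagger c\,T^\dagger$ genuinely computes $b^\dagger$ on the vectors $\xi(z)$. This is precisely where the assumed invariance of $\D$ under all of $T^\sharp$, $(T^{-1})^\sharp$, $c^\sharp$ is consumed; everything else is a routine transcription, via $T$, of the standard coherent-state identities recalled at the start of Section III.
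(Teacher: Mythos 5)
Your proposal is correct and follows essentially the same route as the paper: part (2) is the same chain of equalities (just read in the opposite direction), and part (3) uses exactly the paper's definitions $a=TcT^{-1}$, $b=Tc^\dagger T^{-1}$ and the transported CCR. You merely spell out the steps the paper declares trivial or immediate, namely the computation $T^\dagger(T^{-1})^\dagger=\1$ in part (1) and the identification of $b^\dagger$ with $(T^{-1})^\dagger c\,T^\dagger$ on $\D$ in part (3).
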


\begin{proof}
The first statement is trivial and will not be proved here. As for the second, due to the fact that both $T$ and $T^{-1}$ in Definition \ref{def3} are bounded, we have, for all $f,g\in\Hil$,
$$
\left<f,g\right>=\left<T^\dagger f,T^{-1}g\right>=\frac{1}{\pi}\int_{\Bbb C}d^2z \left<T^\dagger f,\Phi(z)\right>\left<\Phi(z),T^{-1}g\right>=
$$
$$
=\frac{1}{\pi}\int_{\Bbb C}d^2z \left< f,T\Phi(z)\right>\left<(T^{-1})^\dagger\Phi(z),g\right>=\frac{1}{\pi}\int_{\Bbb C}d^2z \left<f,\eta(z)\right>\left<\xi(z),g\right>,
$$
because of (\ref{38}).
To prove (3) we first observe that  our assumption implies that the two operators $a$ and $b$ defined as  $a=TcT^{-1}$ and $b=Tc^\dagger T^{-1}$ map $\D$ into $\D$, and that $[a,b]f=f$ for all $f\in\D$. The eigenvalue equations in (\ref{310}) simply follow now from (\ref{38}).

\end{proof}

It is interesting to notice that the resolution of the identity is valid in all of $\Hil$. This is true in the present settings, but we do not expect a similar result can be established if Assumption $\D$-pbs 3 is replaced with one of its weaker versions. We refer to \cite{abg2015} for some results concerning this situation. Concerning the saturation of the Heisenberg uncertainty relation, this cannot be recovered by these RBCS using the standard, self-adjoint, position and momentum operators $q$ and $p$. However, if $q=\frac{1}{\sqrt{2}}(c+c^\dagger)$  and $p=i\,\frac{1}{\sqrt{2}}(c^\dagger-c)$ are replaced by  $Q=\frac{1}{\sqrt{2}}(a+b)$ and $P=i\,\frac{1}{\sqrt{2}}(b-a)$, then we believe that a {\em deformed} version of the Heisenberg uncertainty relation involving these operators can, in fact, be saturated. This aspect will be discussed in a future paper, together with several examples of RBCS. Here we just consider a first simple example of these states, related to the harmonic oscillator.

\vspace{2mm}

{\bf An example from the harmonic oscillator:--} Let $\Phi(z)$ be the standard coherent state arising in the treatment of the quantum harmonic oscillator with Hamiltonian $H=c^\dagger c+\frac{1}{2}\,\1$, $[c,c^\dagger]=\1$. In the coordinate representation this state, which we indicate here $\Phi_z(x)$, $z\in \Bbb C$ and $x\in \Bbb R$, is the solution of $c\,\Phi_z(x)=z\Phi_z(x)$. With a suitable choice of normalization we have
$$
\Phi_z(x)=\frac{1}{\pi^{1/4}}\,e^{-\frac{1}{2}x^2+\sqrt{2}zx-\Re(z)^2}.
$$
Now, let $P=|e_0\left>\right<e_0|$ be the orthogonal projector operator on the ground state $e_0(x)=\frac{1}{\pi^{1/4}}\,e^{-\frac{1}{2}x^2}$ of the harmonic oscillator. Then the operator $T=\1+iP$ is bounded, invertible, and its inverse, $T^{-1}=\1-\frac{1+i}{2}\,P$, is also bounded. Hence we can use formula (\ref{38}) deducing that
$$
\varphi_z(x)=T\Phi_z(x)=e_0(x)\left(e^{\sqrt{2}zx-\Re(z)^2}+ie^{-\frac{1}{2}|z|^2+\frac{i}{2}\Re(z)\Im(z)}\right),
$$
while
$$
\Psi_z(x)=(T^{-1})^\dagger\Phi_z(x)=e_0(x)\left(e^{\sqrt{2}zx-\Re(z)^2}-\frac{1-i}{2}\,e^{-\frac{1}{2}|z|^2+\frac{i}{2}\Re(z)\Im(z)}\right).
$$
These are our RBCS, in coordinate representation. They both appear to be suitable deformations of the original vector $\Phi_z(x)$. It is not hard to imagine how to generalize this construction: it is enough to replace the operator $P$ with some different orthogonal projector, for instance with the projector on a given normalized vector $u(x)$, $P_u=|u\left>\right<u|$, $u(x)\neq e_0(x)$.

\section{Conclusions}

We have seen how bounded operators with bounded inverse can be used to construct not only Riesz biorthogonal bases, but also bicoherent states, having several properties which are  similar to those of standard coherent states. More important, we have seen that these RBCS are naturally related to $\D$-PBs of a particular kind, the ones for which Assumption $\D$-pbs 3 holds true. It is clear that what we have discussed here is just the beginning of the story. There are several aspects of RBCS which deserve a deeper analysis. Among them, we cite the (maybe) most difficult: what does it happen if Assumption $\D$-pbs 3 is not satisfied? And, more explicitly, what can be said when Assumption $\D$-pbw 3 is true? This is much harder, but possibly more interesting in concrete physical applications, since in this case, even if we can introduce a pair of bicoherent states, \cite{abg2015}, in general there is no bounded operator with bounded inverse mapping these states into a single standard coherent state. Moreover, we have several problems with the domain of the unbounded operators appearing in the game, and this, of course, requires more (and more delicate) mathematics.

Another aspect, which was just touched in \cite{abg2015}, but not here, and which surely deserves a deeper analysis, is the use of bicoherent states, of the Riesz type or not, in quantization procedures. This may be relevant in connection with non conservative systems, or with physical system described by non self-adjoint Hamiltonians.

Another interesting open problem, which has been widely considered for standard coherent states along the years, is to check if completeness can be recovered for some suitable discrete subset of RBCS, i.e. if we can fix a discrete lattice in $\Bbb C$, $\Lambda:=\{z_j\in\Bbb{C},\,j\in\Bbb N\}$, such that the set $\left\{(\eta(z_j),\xi(z_j)), \,z_j\in\Lambda\right\}$ is reach enough to produce a resolution of the identity in $\Hil$. Stated in a different way, is it possible to extend the results deduced in \cite{zak} for standard coherent states to RBCS or to bicoherent states in general? We believe that this can in fact be done for RBCS, while for general bicoherent states this is not so evident.

\section*{Acknowledgements}

The author acknowledges partial support from Palermo University and from G.N.F.M. of the INdAM.

\end{document}